\def \cD     {{\cal D}}
\def \cO     {{\cal O}}
\def \cP     {{\cal P}}
\def \cX     {{\cal X}}
\newcommand{\EE}{\mathbb{E}}
\newcommand{\ie}{\textit{i.e.,}\xspace}
\newcommand{\ed}{\stackrel{\mathrm{def}}{=}}
\newcommand{\LZA}{\textsc{LZA}}
\newcommand{\DIFFENC}{\textsc{Difference-Encode}}
\newcommand{\edge}{\delta}
\newcommand{\ER}{Erd\H{o}s-R\'enyi }
\newtheorem{Theorem}{Theorem}
\newtheorem{Lemma}[Theorem]{Lemma}
\newcommand{\temp}{T}
\newcommand{\ignore}[1]{}
\newcommand{\itmadjo}{\setlength{\itemindent}{-3mm}}
\title{Automata and Graph Compression}
\author[1,2]{Mehryar Mohri\thanks{mohri@cims.nyu.edu}}
\author[2]{Michael Riley\thanks{riley@google.com}}
\author[3]{Ananda Theertha Suresh\thanks{asuresh@ucsd.edu}}
\affil[1]{Courant Institute of Mathematical Sciences}
\affil[2]{Google Research}
\affil[3]{University of California, San Diego}
\author{
\authorblockN{Mehryar Mohri}
\authorblockA{Courant Institute and Google Research\\
{mohri@cs.nyu.edu}}
\and
\authorblockN{Michael Riley}
\authorblockA{Google Research\\
{riley@google.com}}
\and
\authorblockN{Ananda Theertha Suresh}
\authorblockA{UCSD\\
{asuresh@ucsd.edu}}
}
\begin{document}
\maketitle

\begin{abstract}
  We present a theoretical framework for the compression of automata, 
  which are widely used in speech processing and other natural language 
  processing tasks. The framework extends to graph compression.
\ignore{
  Every day, untold gigabytes of data are generated. How to
  efficiently store this massive amount of data has received wide
  attention with most of the work focusing on compressing sequences.
  For example, Lempel-Ziv coding, Huffman coding, and arithmetic
  coding are sequence compression algorithms.  Much of the generated
  data, however, has structure associated with it.  We focus on two
  particular structures, \emph{directed graphs} and \emph{finite
    automata} with applications that include webgraphs and speech
  processing, respectively.}
  Similar to stationary ergodic processes, we formulate a
  probabilistic process of graph and automata generation that captures real world phenomena and provide a universal compression scheme
  \LZA\ for this probabilistic model.  Further, we show that \LZA\
  significantly outperforms other compression
  techniques such as \emph{gzip} and the UNIX \emph{compress} command for several
  synthetic and real data sets.
\ignore{
theertha:Sometimes it beats bzip2 and sometimes
it does not, should we mention it here?
}
\end{abstract}
\noindent

\section{Introduction}
\label{sec:introduction}
The rapid generation of data by search engines and popular online sites, which has been reported to be in the order of hundreds of petabytes, requires efficient storage  mechanisms and better compression algorithms. 
Similarly, sophisticated models on mobile devices for tasks such as speech-to-text conversion often need large storage space 
and memory constraints on these devices demand better compression algorithms. Furthermore, downloading these models requires
high bandwidth so transmitting a compressed version saves communication costs. Hence there is a need for efficient data 
compression both at the data warehouse level (petabytes of data) and at a device level (megabytes of data).
 
Most of the current compression techniques have been
developed for sequential data.  For example, Huffman coding and
arithmetic coding are optimal compression schemes when the underlying
sequence is distributed independently (i.i.d.) according to some known
distribution~\cite{CoverT06}.  If the sequence is not generated
according to an i.i.d. process, but generated from a stationary
ergodic process, then \emph{Lempel-Ziv} schemes are asymptotically
optimal~\cite{ZivL77,HanselPS92}.  In practice, a combination of these schemes
are often used.  For example, the UNIX \emph{compress} command implements
Lempel-Ziv-Walsh (LZW) and \emph{gzip} combines Lempel-Ziv-77 (LZ77)
and Huffman coding.

However, data is often structured.  For example, in a webgraph, a node
represents a URL and a directed edge between two nodes indicates that
one URL has a link to another.  In social networks, a node represents
a user and an edge between two nodes indicates that they are friends.
Finite automata and transducers are widely used in speech recognition,
and a variety of other language processing tasks such as machine translation, information extraction, and parsing~\cite{MohriPR02}.
For example, in speech-processing automata, a path may correspond to a possible sentence in a language model or in a set of recognizer hypotheses (a so-called \emph{lattice}). Often these data sets are very large. For web graphs, there are tens of billions of web pages to choose from.  For speech processing, a large-alphabet language model may have billions of word edges. Hence structured data compression is useful in practice.

A natural question is to ask how one can exploit the structure in data to develop better compression algorithms? Can one do better
than serializing the data and applying algorithms for sequence compression?
Surprisingly, these questions and the compression of structured data
have received little attention. Motivated by previous examples, we focus on automata compression and, as a corollary,
graph compression.

 \cite{ApostolicoD09, GrabowskiB10,
  AnhM10} studied webgraph compression empirically.
 Theoretical webgraph compression
was first studied by \cite{AdlerM01} who proposed a scheme that uses
a \emph{minimum spanning tree} to find similar nodes to compress.
However, they showed that many generalizations of their problem are NP hard.
Motivated by probabilistic models,~\cite{ChoiS09,ChoiS12} showed that
arithmetic coding can be used to near-optimally compress (the
\emph{structure} of) graphs generated by the \ER model.

Automata compression empirically has been studied by~\cite{Daciuk00,DaciukP06,DaciukW11}. However, we are not aware of any theoretical work
focused on automata compression. Our goal is three-fold: $(i)$ 
propose a probabilistic model for automata that captures
real world phenomena, $(ii)$ provide a provable universal
compression algorithm, and $(iii)$ show experimentally that the
algorithm fares well compared to techniques such as gzip and
compress. 
We note that our probabilistic model can be viewed as a generalization of \ER graphs~\cite{AlonS92}.

The rest of the paper is organized as follows: in
Section~\ref{sec:automata}, we describe automata and their properties.
In Section~\ref{sec:model}, we describe our probabilistic model and
show how it captures many real-world applications. In
Section~\ref{sec:algorithm}, we describe our proposed algorithm \LZA,
prove its optimality and in
Section~\ref{sec:experiments}, we demonstrate the algorithm's
practicality in terms of its degree of compression.\ignore{ and running time.}

\section{Directed Graphs and Finite Automata}
\label{sec:automata}

A directed graph is a pair $(Q,\edge)$
where $Q =\{1,2,3,\ldots, n\}$ is the set of nodes and 
$\edge\colon Q \rightarrow Q^*$ 
is the set of edges where for every node $q$,
$\edge(q)$ is the set of nodes to which it is connected.
Note that our notation for directed graphs is chosen
to harmonize with finite automata.

Automata generalize graphs.  An unweighted automaton $A$ is a
$5$-tuple $(Q,\Sigma,\edge,q_i,F)$ where $Q =\{1,2,\ldots ,n\}$ is the
set of states, $\Sigma = \{1,2,\ldots,m\}$ is a finite alphabet,
$\edge\colon Q \times \Sigma \rightarrow Q^*$ is the transition
function, $q_i \in Q$ is the initial state, and $F \subseteq Q$ are
the final states.  The transitions from state $q$ by label $a$ to
states $\{q'_1,q'_2,\ldots\}$ are given by $\edge(q,a)
=\{q'_1,q'_2,\ldots\}$.  If there is no transition by label $a$, then
$\edge(q,a)=\emptyset$.  We use $E \subset Q \times \Sigma \times Q$ 
to denote the set of all transitions $(q,a,q')$ and $E[q]$ to denote the set of all transitions from state $q$.
\ignore{
\begin{figure}[t]
\centering
\includegraphics[scale=0.6]{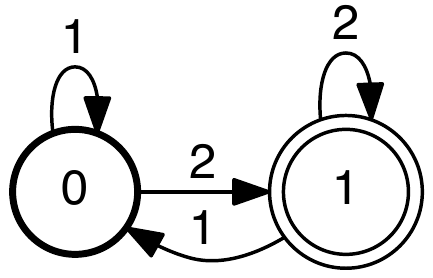}
{\footnotesize{\caption{An example automaton.}}}
\label{fig:fsa}
\end{figure}
\begin{figure}[t]
\centering
\includegraphics[scale=0.5]{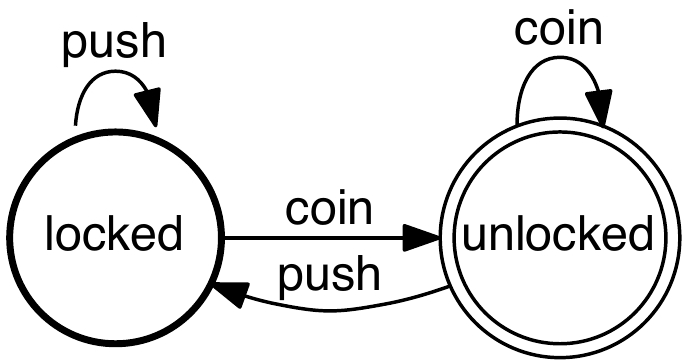}
{\footnotesize{\caption{Subway turnstile automaton.}}}
\label{fig:lock}
\end{figure}
}

\begin{figure}[t]
\centering
\begin{minipage}{.5\textwidth}
  \centering
\includegraphics[scale=0.6]{fsa}
\caption{{{An example automaton.}}}
\label{fig:fsa}
\end{minipage}%
\begin{minipage}{.5\textwidth}
  \centering
\includegraphics[scale=0.5]{tstile}
\caption{{{A subway turnstile automaton.}}}
\label{fig:lock}
\end{minipage}
\end{figure}
\begin{figure}[t]
\begin{center}
\includegraphics[scale = 0.6]{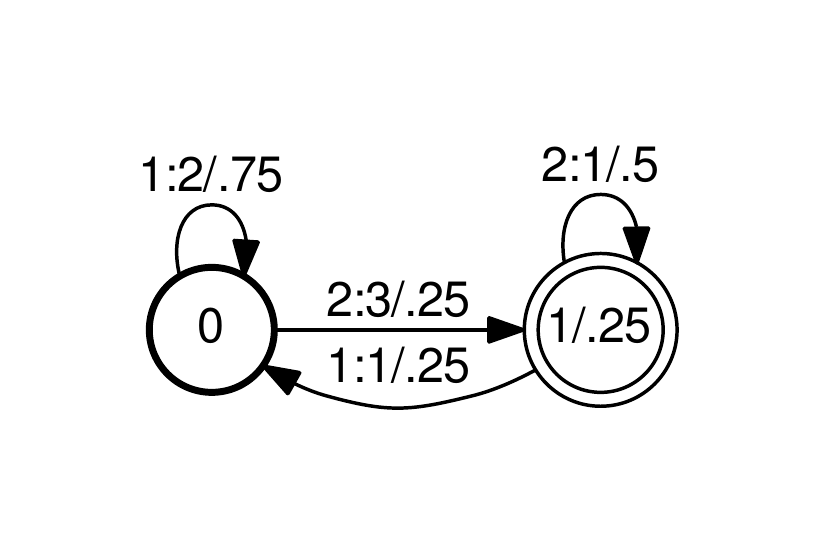}
\end{center} 
\caption{An example weighted transducer.}
\label{fig:fst}
\end{figure}
\begin{figure}[t]
\centering     
\subfigure{\label{fig:a}\includegraphics[width=32mm,angle=-90]{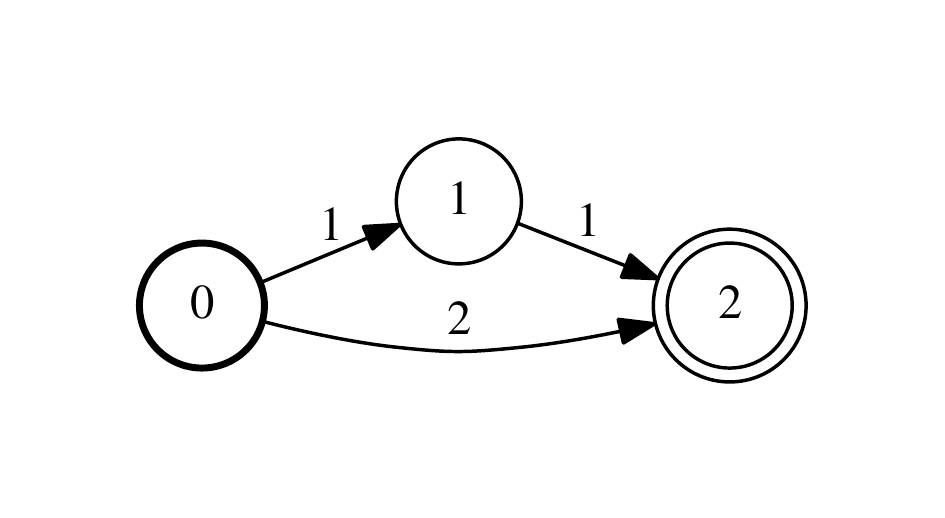}}
\subfigure{\label{fig:b}\includegraphics[width=32mm,angle=-90]{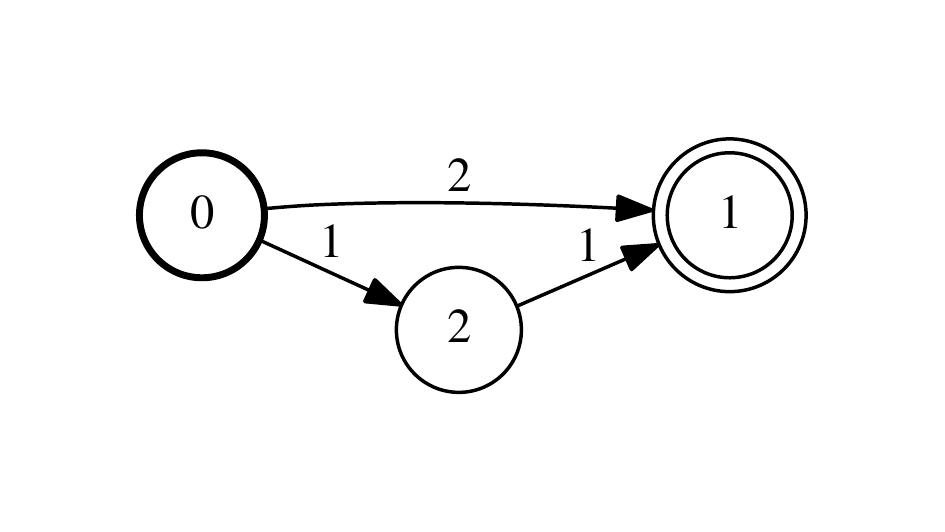}}
\caption{An example of isomorphic automata. The above two automata are same under the permutation $0\to0, 1\to2$, and $2\to 1$}
\label{fig:iso}
\end{figure}

An example of an automaton is given in Figure \ref{fig:fsa}.
State 0 in this simple example is the initial state (depicted with the bold circle)
and state 1 is the final state (depicted with double circle).
The strings \emph{12} and \emph{222} are among those accepted by this automaton.
By using symbolic labels on this automaton in place of the usual integers, 
as depicted in Figure~\ref{fig:lock},
we can interpret this automaton as the operation of a subway turnstile.
It has two states \emph{locked} and \emph{unlocked} and 
actions (alphabet) \emph{coin} and \emph{push}.
If the turnstile is in the locked state and you push, 
it remains locked and if you insert a coin, 
it becomes unlocked. 
If it is unlocked and you insert a coin it remains 
unlocked, but if you push once it becomes locked.

Note that directed graphs form a subset of 
automata with $\Sigma = \{1\}$ and hence we focus on automata compression. 
Furthermore, to be consistent with the existing automata literature, 
we use states to refer to nodes and transitions to refer to edges in both graphs
and automata going forward.

A main motivation to study automata is their application in speech and natural language processing.
In some circumstances, transitions may be generalized to have an
output label and a weight as well as the usual input label. 
Such automata, called \emph{weighted finite state transducers} (FSTs),
are extensively used in these fields~\cite{RocheS95, MohriPR02, AlabauCVJ07}. An example of an FST is given in Figure \ref{fig:fst}.
The string \emph{12} is among those accepted by this transducer. For this input, the transducer outputs
the string \emph{23} and has weight $.046875$ (transitions weights $0.75$ times $0.25$ times final weight $0.25$).

We propose an algorithm for unweighted automata compression. For FSTs,
we use the same algorithm by treating the input-output label pair as a
single label. If the automaton is weighted, we just add the weights at
the end of the compressed file by using some standard representation.

\section{Random automata compression}
\label{sec:model}

\subsection{Probabilistic model}
Our goal is to propose a probabilistic model for automata generation that captures real world phenomena. To this end, we first review probabilistic models on sequences and draw connections to probabilistic models for automata.
\subsubsection{Probabilistic processes on sequences}
We now define i.i.d. sampling of sequences.
Let $x^n_1$ denote an $n$-length sequence $x_1,x_2\ldots x_n$.
If $x^n_1$ are $n$ independent samples from a distribution $p$ over $\cX$,
then $p(x^n_1) = \prod^n_{i=1} p(x_i)$.
Note that under i.i.d. sampling, the index of the sample has no importance, \ie 
\begin{equation*}
p(X_i = x) = p(X_j = x), \, \forall 1\leq i,j \leq n, x \in \cX.
\end{equation*}
stationary ergodic processes generalizes i.i.d. sampling.
For a stationary ergodic process $p$ over sequences
\begin{equation*}
p(X^m_i = x^m_i) = p(X^{m+j}_{i+j} = x^m_i), \forall i,j,m, x^{m}_i.
\end{equation*}
Informally stationary ergodic processes are those for which only the relative position of the indices matter and not the actual ones.

\subsubsection{Probabilistic processes on automata}
\label{sec:prob_automata}
Before deriving models for automata generation, we first discuss 
an invariance property of automata that is useful in practice. 
The set of strings accepted by an automaton and the time and
space of its use are not affected by the state numbering. 
Two automata are \emph{isomorphic} if they coincide modulo a
renumbering of the states. Thus, automata $(Q, \Sigma, \edge, q_i, F)$
and $(Q', \Sigma, \edge', q'_i, F')$ are isomorphic, if there is a
one-to-one mapping $f\colon Q \to Q'$ such that $f(\edge(q, a)) =
\edge'(f(q), a)$, for all $q \in Q$ and $a \in \Sigma$, $f(q_i) =
q'_i$, and $f(F) = F'$, where $f(F) = \{f(q)\colon q \in F\}$.

Under stationary ergodic processes, two sequences with the same order of
observed symbols have the same probabilities. Similarly we wish to
construct a probabilistic model of automata such that any two
isomorphic automata have the same probabilities, since 
the state numbering does not have explicit importance, 
For example, the
probabilities of automata in Figure~\ref{fig:iso} are the same.

There are several probabilistic models of automata and 
graphs that satisfy this property. Perhaps the most studied 
random model is the \ER model $G(n,p)$, 
where each state is connected to every other state independently with probability $p$~\cite{AlonS92}. 
Note that if two automata are isomorphic then the \
\ER model assigns them the same probability. 
The \ER model is analogous to i.i.d. sampling on sequences. 
We wish to generalize the \ER model to more realistic models of automata. 

Since the state numbering is to be disregarded,
the only possible dependence of transitions from a state would be by the paths leading to that state. 
This arises naturally in language modeling tasks. 
\ignore{
We generalize the \ER model to a Markov-equivalent sampling model for
automata.  The need for such models arises naturally in several
applications. 
}
For example in an $n$-gram model,
a state might have an outgoing transition with label \emph{Francisco} or \emph{Diego}
only if it has a input transition with label \emph{San}.  
This is an example where we have restrictions on 
paths of length $2$. In general, we may have restrictions on paths of any length $\ell$. 

We define an $\ell$-\emph{memory} model for automata as follows.  Let
$h^{\ell}_q$ be the set of paths of length at most $\ell$
leading to the state $q$. The probability distribution of transitions
from a state depends on the paths leading to it.  Let $\edge(q,*) \ed
\edge(q,1), \edge(q,2),\ldots, \edge(q,m)$.
\begin{align*}
p(A)
& = p(\edge(1,*), \edge(2,*), \ldots, \edge(n,*)) 
\propto \prod^n_{q=1} p(\edge(q,*) | h^\ell_q).
\end{align*}
Similarly, transitions leaving a state $q$ dissociate into marginals
conditioned on the history $h^{\ell}_q$ and probability that $q' \in \edge(q,a)$ also dissociates into marginals.
 \begin{align*}
  p(\edge(q,*) | h^\ell_q) 
& = \prod_{a \in \Sigma} p(\edge(q,a) |h^\ell_q) \\
& = \prod_{a \in \Sigma} \prod^n_{q'=1} p(\mathbb{I}(q' \in \edge(q,a)) |h^\ell_q),
 \end{align*}
where $\mathbb{I}(q' \in \edge(q,a))$ is the indicator of the event
$q' \in \edge(q,a)$.
Note that the probabilities are defined with proportionality. 
This is due to the probabilities possibly not adding to one. 
Thus we have a constant $Z$ to ensure that it is a probability distribution.
\begin{align*}
p(A) &= p(\edge(1,*), \edge(2,*), \ldots, \edge(n,*))\\
& =  \frac{1}{Z}\prod^n_{q=1} p(\edge(q,*) | h^\ell_q) \\
& = \frac{1}{Z} \prod^n_{q=1} 
\prod_{a \in \Sigma} p(\edge(q,a) | h^\ell_q).
\end{align*}
Note that $\ell$-\emph{memory} models assign the same probability to
automata that are isomorphic. In our calculations, we restrict $\ell$ to make the model tractable.

Note that sequences form a subset of automata as follows. 
For a sequence $x^n$ over alphabet $\Sigma$, consider the automata representation with states $Q = \{1,2,\ldots n\}$, initial state $q_i = 1$, final state $F = \{n\}$, alphabet $\Sigma$, and transition function $\delta(i,x_i) = i+1$ and $\delta(i,x)= \phi$ for all $x \neq x_i$. Informally, every sequence can be represented as an automaton with line as the underlying structure. Furthermore, note that 
the probability that two isomorphic automata should have the same probability is same as stating the indices in sequences do not have explicit meaning (stationary ergodic property).
\ignore{
Furthermore, the probabilistic model is
similar in flavor of graphical models.}

\subsection{Entropy and coding schemes}

A compression scheme is a mapping from $\mathcal{X}$ to $\{0,1\}^*$
such that the resulting code is prefix-free and can be uniquely
recovered. For a coding scheme $c$, let $l_c(x)$ denote the length of
the code for $x \in \cX$.  It is well-known that the expected number
of bits used by any coding scheme is the entropy of the distribution,
defined as $H(p) \ed \sum_{x\in \cX} p(x) \log \frac {1}{p(x)}$. The
well known Huffman coding scheme achieves this entropy up-to one
additional bit. For $n$-length sequences arithmetic coding is used,
%
which achieves compression up-to entropy with few additional bits of
error. 
\ignore{We conclude this section by noting that for a product
distribution $p_1\cdot p_2\cdot \ldots \cdot p_s$ over $\cX_1 \times
\cX_2 \times \ldots \times \cX_s$, the entropy is
 \begin{equation*}
H(p_1 \cdot p_2 \cdot \ldots \cdot p_s) =  \sum^s_{i=1} H(p_i).
\end{equation*}}

The above-mentioned coding methods such as Huffman coding and
arithmetic coding require the knowledge of the underlying
distribution. In many practical scenarios, the underlying distribution
may be unknown and only the broader class to which the distribution
belongs may be known. For example, we might know that the given
$n$-length sequence is generated by i.i.d.\ sampling of some unknown
distribution $p$ over $\{1, 2, \ldots, k\}$. The objective of a
universal compression scheme is to asymptotically achieve $H(p)$ bits
per symbol even if the distribution is unknown. A coding scheme $c$
for sequences over a class of distributions $\cP$ is called universal
if
\begin{equation*}
\limsup_{n \to \infty} \max_{p \in \cP}\frac{ \mathbb{E}[l_c(X^n)] - H(X^n)}{n} = 0.
\end{equation*}
The normalization factor in the above definition is $n$, as the number
of sequences of length $n$ increases linearly with $n$. For automata
and graphs with $n$ states (denoted by $A_n$) we choose a scaling scaling factor of $n^2$ as the number
of automata scales as $\exp(n^2)$. We call a coding scheme $c$ for
automata over a class of distributions $\cP$ universal if
\begin{equation*}
 \limsup_{n \to \infty} \max_{p \in \cP} \frac{\EE[l_c(A_n)] - H(A_n)}{n^2} = 0 .
\end{equation*}
We now describe the algorithm \LZA. Note that the algorithm does not
require the knowledge of the underlying parameters or the
probabilistic model.

\section{Algorithm for automaton compression}
\label{sec:algorithm}

Our algorithm recursively finds substructures over states and uses a
Lempel-Ziv subroutine. Our coding method is based on two auxiliary
techniques to improve the compression rate: Elias-delta coding and
coding the differences. We briefly discuss these techniques and
their properties before describing our algorithm.

\subsection{Elias-delta coding and coding the differences}

Elias-delta coding is a universal compression scheme for integers~\cite{Elias75}. 
To represent a positive integer $x$, Elias-delta codes use
$ \lfloor \log x \rfloor + 2 \lfloor  \log \lfloor \log x \rfloor +1 \rfloor +1$
bits.
\ignore{
The encoding and decoding algorithms for Elias-delta codes are as follows.
\begin{center}
\fbox{\begin{minipage}{1.0\textwidth}
Algorithm \textsc{Elias-encode} \newline
\textbf{Input:} Integer $x$.
\begin{enumerate}
\item 
Separate $x$ into the highest power of $2$ it contains $(2^{N'})$ and the remaining $N'$ binary digits.
\item 
Write $N = N' + 1$ in binary and subtract $1$ from the number of bits required to write $N$ in binary and prepend that many zeros.
\item
Append the remaining $N'$ binary digits to this representation of $N$.
 \end{enumerate}
\end{minipage}}
\end{center} 
\begin{center}
\fbox{\begin{minipage}{1.0\textwidth}
Algorithm \textsc{Elias-decode} \newline
\textbf{Input:} Integer $x$.
\begin{enumerate}
 \item 
Read and count zeros from the stream until you reach the first one. Call this count of zeros $L$.
\item 
Considering the one that was reached to be the first digit of an integer, with a value of $2L$, read the remaining $L$ digits of the integer. Call this integer $N$.
\item
Put a one in the first place of our final output, representing the value $2N-1$. Read and append the following $N-1$ digits.
 \end{enumerate}
\end{minipage}}
\end{center} 
}
To obtain a code over $\mathbb{N} \cup \{0\}$, we replace $x$ by $x+1$and use Elias-delta codes.

We now use Elias-delta codes to obtain to code sets of integers. Let $x_1,x_2,\ldots,x_m$ be integers such that $0 \leq x_1 \leq x_2
\leq \cdots \leq x_m \leq n$.  We use the following algorithm to code
$x_1,x_2,\ldots,x_m$. The decoding algorithm follows from
\textsc{Elias-decode}~\cite{Elias75}.
\begin{center}
\fbox{\begin{minipage}{1.00\textwidth}
Algorithm \DIFFENC \newline
\textbf{Input:} Integers $0 \leq x_1 \leq x_2
\leq \cdots \leq x_m \leq n$.
\begin{enumerate}
\itmadjo
 \item 
Use \textsc{Elias-encode} to code $x_1-0$, $x_2-x_1$,  \ldots $x_d-x_{d-1}$.
 \end{enumerate}
\end{minipage}}
\end{center} 
\begin{Lemma}[Appendix~\ref{app:different}]
\label{lem:difference}
For integers such that $0\leq x_1 \leq x_2, \ldots x_d \leq n$,
\DIFFENC\ uses at most
\begin{equation*}
d \log \frac{n+d}{d} + 2d \log\Bigl( \log \frac{n+d}{d} +1 \Bigr) + d 
\end{equation*}
bits.
\end{Lemma}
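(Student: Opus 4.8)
The plan is to bound the Elias-delta code length of each encoded difference and then aggregate using the concavity of $\log$. Set $x_0 = 0$ and let $y_i = x_i - x_{i-1} \geq 0$ for $1 \leq i \leq d$ be the $d$ nonnegative differences produced by \DIFFENC. The key telescoping observation is that $\sum_{i=1}^d y_i = x_d - x_0 = x_d \leq n$, hence $\sum_{i=1}^d (y_i + 1) \leq n + d$. This single linear constraint on the shifted differences is what drives the entire bound, so I would establish it first.

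Next I would write the per-difference cost. Since each $y_i$ is coded by applying Elias-delta to $y_i + 1$, its length is $\lfloor \log(y_i+1)\rfloor + 2\lfloor \log(\lfloor \log(y_i+1)\rfloor + 1)\rfloor + 1$. Dropping all three floors — each floor only decreases its argument, and since $\log$ is increasing the inner floor can be removed as well — gives the clean per-term upper bound $\log(y_i+1) + 2\log(\log(y_i+1)+1) + 1$. Summing over $i$, the total length $L$ satisfies $L \leq \sum_i \log(y_i+1) + 2\sum_i \log(\log(y_i+1)+1) + d$, where the trailing $d$ collects the constant one-bits.

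Then I would bound the two sums by Jensen's inequality. For the first sum, concavity of $\log$ together with $\sum_i(y_i+1) \leq n+d$ gives $\sum_i \log(y_i+1) \leq d\log\frac{\sum_i(y_i+1)}{d} \leq d\log\frac{n+d}{d}$, using that $t \mapsto d\log(t/d)$ is increasing. For the second (nested) sum, set $a_i = \log(y_i+1)$; the previous step shows $\frac{1}{d}\sum_i a_i \leq \log\frac{n+d}{d}$, and applying Jensen to the concave map $t \mapsto \log(t+1)$ yields $\sum_i \log(a_i+1) \leq d\log\bigl(\tfrac{1}{d}\sum_i a_i + 1\bigr) \leq d\log\bigl(\log\frac{n+d}{d}+1\bigr)$. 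Substituting both bounds into the expression for $L$ produces exactly the claimed $d\log\frac{n+d}{d} + 2d\log\bigl(\log\frac{n+d}{d}+1\bigr) + d$.

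The step I expect to be the main obstacle is the nested-logarithm term: it cannot be controlled directly by the constraint on the $y_i$ and instead requires a second, chained application of Jensen, feeding the bound from the previous step through the monotonicity of $\log$. The removal of the floor functions must also be ordered correctly so that no term is accidentally lower-bounded, but this is routine once the two Jensen applications are in place.
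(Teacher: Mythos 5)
Your proof is correct and takes essentially the same route as the paper's: set $x_0 = 0$, telescope the differences against the constraint $x_d \leq n$, and control the total code length via Jensen's inequality applied to the concave per-term bound. The only cosmetic difference is that the paper applies Jensen once to the combined concave function $\theta(x) = \log(x+1) + 2\log(\log(x+1)+1) + 1$, whereas you apply it twice --- once to the $\log$ terms and once, chained through monotonicity, to the nested $\log\log$ terms --- and both yield the identical bound.
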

We first give an example to illustrate \DIFFENC's usefulness.
Consider graph representation using
adjacency lists.  For every source state, the order in which the
destination states are stored does not matter. For example, if
state $1$ is connected to states $2$, $4$, and $3$, it suffices to
represent the unordered set $\{2, 3, 4 \}$. In general if a state is
connected to $d$ out of $n$ states, then it suffices to encode the
ordered set of states $y_1, y_2, \ldots, y_d$ where $1 \leq y_1 \leq
y_2\leq y_3 \ldots y_d \leq n$. The number of such possible sets is
$\binom{n}{d}$.  If the state-sets are all equally likely, then the
entropy of state-sets is $ \log {n \choose d} \approx d \log
\frac{n}{d}$.

If each state is represented using $\log n$ bits, then $d \log n > d
\log \frac{n}{d}$ bits are necessary, which is not optimal.  However,
by Lemma~\ref{lem:difference}, \DIFFENC\ uses $d \log
\frac{n + d}{d} (1 + o(1)) \approx d \log \frac{n}{d}$, and hence is
asymptotically optimal.  Furthermore, the bounds in
Lemma~\ref{lem:difference} are for the worst-case scenario and in
practice \DIFFENC\ yields much higher savings.  A
similar scenario arises in \LZA\ as discussed later.

\subsection{LZA}

We now have at our disposal the tools needed to design a variant of
the Lempel-Ziv algorithm for compressing automata, which we denote by
\LZA.  Let $d_q \ed |E[q]|$  be the number of transitions from state
$q$ and let transitions in $E[q] = \{(q,a_1,q_1),(q,a_2,q_2),\ldots, (q,a_{d_q},q_{d_q})\}$ are ordered as follows:
for all $i$, $q_i \leq q_{i+1}$ and if $q_i = q_{i+1}$ then $a_i < a_{i+1}$.
\ignore{$(a_1, \edge(q, a_1)), (a_2, \edge(q, a_2)), \ldots, (a_{d_q},
\edge(q, a_{d_q}))$ be the list of non-empty labeled transitions
leaving $q$ ordered as follows: For all $i$, $\edge(q, a_i) \leq
\edge(q, a_{i + 1})$, and if $\edge(q, a_i) = \edge(q, a_{i + 1})$,
then $a_i < a_{i + 1}$.}

The algorithm is based on the observation that the ordering of the
transitions leaving a state does not affect the definition of an
automaton and works as follows. The states of the automaton are
visited in a BFS order. For each state visited, the set of outgoing
transitions are sorted based on their destination state. Next, the
algorithm recursively finds the largest overlap of the sets of
transitions that match some dictionary element and encodes the pair
(matched dictionary element number, next transition), and adds the
dictionary element to $\temp_d$, alphabet of the transition to $\temp_{\Sigma}$, and  the transition to $\temp_\edge$. It also updates the dictionary element by adding a new
dictionary element (matched dictionary element number, next
transition) to the dictionary.  Finally it encodes $\temp_d$,
$\temp_\edge$ using \DIFFENC\ and encodes each element in $T_{\Sigma}$ using
$\lceil \log m \rceil$ bits.

\vskip .1in
\noindent
\fbox{\begin{minipage}{1.00\textwidth} Algorithm
    \LZA\ \newline
    \textbf{Input:} The transition label function $\edge$ of the automaton. \\
    \textbf{Output:} Encoded sequence $S$.
\begin{enumerate}[leftmargin=*]
 \item 
Set dictionary $D = \emptyset$.
\item 
Visit all states $q$ in BFS order. For every state $q$ do:
\begin{enumerate}
\item 
Code $d_q$ using $\lceil \log nm \rceil$ bits.
\item 
Set $\temp_d = \emptyset$, $\temp_{\Sigma} = \emptyset$, and $\temp_\edge = \emptyset$. 
\item 
Start with $j=1$ in $E[q] = \{(q,a_1,q_1),(q,a_2,q_2),\ldots, (q,a_{d_q},q_{d_q})\}$ and continue till $j$ reaches $d_q$.
\begin{enumerate}
\item 
Find largest $l$ such that $(a_j,q_j),
\ldots,(a_{j + l},q_{j + l}) \in D$. Let this dictionary element be $d_r$.
\item
Add  $(a_j,q_j)
,\ldots
,(a_{j + l + 1},q_{j + l + 1})$ to $D$.
\item
Add $d_r$ to $\temp_d$, $q_{j + l + 1}$ to $\temp_\edge$, and $a_{j + l + 1}$ 
to $T_{\Sigma}$.
\end{enumerate}
\item
Use \DIFFENC\ to encode $\temp_d$, $\temp_\edge$ and encode each element in $T_{\Sigma}$ using $\lceil \log m\rceil $ bits. Append these sequences to $S$
\end{enumerate}
\item Discard the dictionary and output $S$.
 \end{enumerate}
\end{minipage}}
\vskip .1in 

We note that simply compressing the unordered sets $\temp_d$ and
$\temp_\edge$ suffices for unique reconstruction and thus \DIFFENC\ is
the natural choice.  Observe that \DIFFENC\ is a succinct
representation of the dictionary and does not affect the way
Lempel-Ziv dictionary is built.  Thus the decoding algorithm follows immediately from retracing the steps in \textsc{LZA} and LZ78 decoding algorithm.

If \DIFFENC\ is not used, the number
of bits used would be approximately $|D| \log |D| + |D| \log n $,
which is strictly greater than that number of bits in
Lemma~\ref{lem:LZA}. Furthermore, we did consider several other
natural variants of this algorithm where we difference-encode the
states first and then serialize the data using a standard Lempel-Ziv
algorithm. However, we could not prove asymptotic optimality for those
variants. Proving their non-optimality requires constructing
distributions for which the algorithm is non-optimal and is not the
focus of this paper.

We first bound the number of bits used by \LZA\ in terms of the size
of the dictionary $|D|$, the number of states $n$, and the alphabet
size $m$. This bound is independent of the underlying
probabilistic model. Next, we proceed to derive probabilistic bounds.

\begin{Lemma}[Appendix~\ref{app:LZA}]
\label{lem:LZA}
The total number of bits used by \LZA\ is at most
\begin{multline*}
|D| \left[ \log (n + 1)  +  \log \left( \nu + 1 \right)  +  2 \log (\log (n + 1)  +  1) \right] 
+ |D| \left[ 2   \log  \left(  \log \left( \nu + 1
    \right)  +  1  \right)  +  2  +   \lceil \log m \rceil   \right]
+  n \lceil \log nm \rceil,
\end{multline*}
where $\nu = \frac{n^2}{|D|} $.
\end{Lemma}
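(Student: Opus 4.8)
The plan is to decompose the output length of \LZA\ into four independent contributions, bound each, and then recombine the per-state bounds with a single application of Jensen's inequality. Write $k_q$ for the number of iterations of the inner loop (step~2(c)) executed at state $q$; since each such iteration appends exactly one element to each of $\temp_d$, $\temp_\edge$, $\temp_{\Sigma}$ and inserts exactly one new phrase into the dictionary, we have $\sum_{q=1}^{n} k_q = |D|$. The four contributions are then: (i) the header $d_q$, coded in step~2(a) with $\lceil \log nm\rceil$ bits per state, for a total of $n\lceil \log nm\rceil$; (ii) the labels $\temp_{\Sigma}$, coded with $\lceil \log m\rceil$ bits each, for a total of $\bigl(\sum_q k_q\bigr)\lceil \log m\rceil = |D|\lceil \log m\rceil$; (iii) the \DIFFENC\ encoding of $\temp_\edge$; and (iv) the \DIFFENC\ encoding of $\temp_d$. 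Terms (i) and (ii) already match the $n\lceil\log nm\rceil$ and $|D|\lceil\log m\rceil$ summands in the statement, so the real work is in (iii) and (iv).

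For (iii), at state $q$ the list $\temp_\edge$ consists of $k_q$ destination states drawn from $\{0,1,\dots,n\}$, written in nondecreasing order (the transitions in $E[q]$ are sorted by destination and the loop reads them left to right, emitting one destination per iteration at strictly increasing positions). Lemma~\ref{lem:difference} with $d=k_q$ and bound $n$ therefore gives at most $k_q\log\frac{n+k_q}{k_q}+2k_q\log\!\bigl(\log\frac{n+k_q}{k_q}+1\bigr)+k_q$ bits. For (iv) the list $\temp_d$ consists of $k_q$ dictionary indices in $\{0,1,\dots,|D|\}$ (the dictionary never exceeds its final size $|D|$); after sorting it, which is permissible since only the unordered sets $\temp_d,\temp_\edge$ are needed for reconstruction, Lemma~\ref{lem:difference} with $d=k_q$ and bound $|D|$ gives at most $k_q\log\frac{|D|+k_q}{k_q}+2k_q\log\!\bigl(\log\frac{|D|+k_q}{k_q}+1\bigr)+k_q$ bits.

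The crux is to sum these per-state bounds over the $n$ states and collapse the result into the stated closed form using $\sum_q k_q=|D|$. The key analytic fact is that $g_M(x)=x\log\frac{M+x}{x}$ is concave on $x>0$ (a short second-derivative computation gives $g_M''(x)\propto -M^2/\bigl(x(M+x)^2\bigr)\le 0$), and similarly for the iterated-logarithm summand $x\log\bigl(\log\frac{M+x}{x}+1\bigr)$. Jensen's inequality then yields $\sum_{q} g_M(k_q)\le n\,g_M\!\bigl(\tfrac{|D|}{n}\bigr)=|D|\log\frac{nM+|D|}{|D|}$. Taking $M=n$ for (iii) gives $\frac{nM+|D|}{|D|}=\frac{n^2+|D|}{|D|}=\nu+1$, producing $|D|\log(\nu+1)$ and $2|D|\log(\log(\nu+1)+1)$; taking $M=|D|$ for (iv) gives $\frac{nM+|D|}{|D|}=\frac{n|D|+|D|}{|D|}=n+1$, producing $|D|\log(n+1)$ and $2|D|\log(\log(n+1)+1)$. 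The trailing $+k_q$ in each \DIFFENC\ bound sums to $+|D|$ apiece, accounting for the $+2$ inside the second bracket. Collecting the four contributions reproduces the claimed expression exactly.

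The main obstacle is the third step: verifying concavity of the iterated-logarithm summand so that Jensen applies to the second-order \DIFFENC\ terms, and confirming that the balanced profile $k_q\equiv |D|/n$ is admissible as the Jensen maximizer once the integer counts are relaxed to nonnegative reals with $g_M$ extended continuously by $g_M(0)=0$. A minor point to dispatch is that $\temp_d$ is not produced in sorted order, which is precisely why the reconstruction-from-unordered-sets observation must be invoked before applying Lemma~\ref{lem:difference} to it.
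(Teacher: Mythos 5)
Your proposal is correct and follows essentially the same route as the paper's proof: per-state application of Lemma~\ref{lem:difference} with bounds $n$ (for $\temp_\edge$) and $|D|$ (for $\temp_d$), followed by a concavity/Jensen argument showing the sum is maximized at the uniform profile $k_q = |D|/n$, plus the direct accounting of the $d_q$ headers and label bits. The paper compresses both \DIFFENC\ terms into the single concave function $\theta$ and invokes concavity without the explicit second-derivative check or the remark about sorting $\temp_d$, but these are presentational refinements rather than a different argument.
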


\subsection{Proof of optimality}
\label{sec:optim}

In this section, we prove that \LZA\ is asymptotically optimal for the
random automata model introduced in Section~\ref{sec:model}.
Lemma~\ref{lem:LZA} gives an upper bound on the number of bits used in
terms of the size of the dictionary $|D|$. 
We now present a lower bound on the entropy in terms of $D$ which will help us prove this
result.  The proof is given in Appendix~\ref{app:lower}.
\begin{Lemma}
\label{lem:lower}
\LZA\ satisfies
 \begin{equation*}
H(p) \geq  \EE[|D|] \left[  \log \frac{\EE[|D|]}{n}  - m^{\ell} - \log \left( \frac{n^2m}{\EE[|D|]}  +  1 \right) -1\right].
\end{equation*}
\end{Lemma}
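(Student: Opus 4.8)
The plan is to lower-bound the entropy by the standard Lempel--Ziv route, adapted to the $\ell$-memory model of Section~\ref{sec:prob_automata}: write $H(p) = \EE_{A}[\log \tfrac{1}{p(A)}]$, substitute the factorization of $p(A)$, and then convert the resulting sum of conditional log-probabilities into a function of the dictionary size $|D|$ by exploiting that the \LZA\ phrases are distinct. First I would use the model to write $\log\tfrac1{p(A)} = \log Z + \sum_{q}\sum_{a}\log\tfrac{1}{p(\edge(q,a)\mid h^\ell_q)}$ and regroup the inner sum along the phrases that the \LZA\ parse of $A$ places in the dictionary. Each phrase is a maximal run of (label, destination) pairs leaving a single state $q$; its contribution to $p(A)$ depends on $A$ only through the history class $h^\ell_q$ of that state; and, inherited from the LZ78 construction, \emph{all} dictionary phrases are distinct.

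The central step is the analogue of Ziv's inequality. I would group the phrases by their history class and their length, and argue that within a single group the distinct phrases correspond to disjoint events, so their conditional probabilities sum to at most one. Maximizing a product under a fixed-sum constraint (AM--GM) then bounds the product of the probabilities in a group of size $c_g$ by $c_g^{-c_g}$, which yields $\log\tfrac1{p(A)} \ge \log Z + \sum_g c_g\log c_g$. The number of groups is controlled by two crude counts: a history class is fixed by which of the at most $m^\ell$ label-paths of length $\le\ell$ reach the state, hence there are at most $2^{m^\ell}$ classes (this produces the additive $m^\ell$ term once a logarithm is taken), while a phrase's length and terminal symbol range over $O(n^2 m)$ possibilities (this produces the $\log(\tfrac{n^2 m}{\EE[|D|]}+1)$ term).

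With the group count in hand, convexity of $t\log t$ (Jensen over the groups) gives $\sum_g c_g\log c_g \ge |D|\log\tfrac{|D|}{N_g}$ for $N_g$ the number of groups, since $\sum_g c_g = |D|$; this already exhibits the shape $|D|[\log|D| - m^\ell - \log(\cdots)]$. Taking expectations and applying Jensen a second time to the convex map $t\mapsto t\log t$, namely $\EE[|D|\log|D|]\ge \EE[|D|]\log\EE[|D|]$, replaces the random $|D|$ by $\EE[|D|]$, and collecting the loose additive constants together with the $\log Z$ contribution yields the stated bound. I expect the $n$ and $n^2m$ counts to combine into the $\log\tfrac{\EE[|D|]}{n}$ and $\log(\tfrac{n^2 m}{\EE[|D|]}+1)$ terms exactly as the bookkeeping dictates.

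The hard part will be the second step: making the distinctness/disjointness argument rigorous in the automaton setting. Unlike a sequence source, $p(A)$ is a product of Bernoulli indicators over \emph{all} (label, destination) pairs rather than a sequential law over chosen symbols, so I must define the conditional mass attached to a phrase carefully and verify that distinct phrases of equal length and equal history class really do give masses summing to at most one before AM--GM applies. A secondary technical point is controlling $\log Z$ (the histories $h^\ell_q$ depend on $A$ itself, so $Z$ is not manifestly $1$) and checking that both Jensen steps point in the direction that \emph{weakens} the bound, so that no step is inadvertently used in the wrong sense.
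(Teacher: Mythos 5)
Your overall route is the same as the paper's: expand $\log\frac{1}{p(A)}$ under the $\ell$-memory model, drop $\log Z$ using $Z\geq 1$, regroup the conditional log-probabilities along the \LZA\ phrases, use distinctness-plus-disjointness to extract a $\sum c\log c$ term, and finish with two applications of Jensen. The gap is in precisely the step you flag as hard, and your plan for it does not work. Grouping phrases by (history class, length) is not enough for disjointness: a phrase is the event that a specific \emph{contiguous block of positions} in a state's sorted transition list takes specific values, and two distinct phrases of equal length under the same history that occupy \emph{different} position ranges are not mutually exclusive. Indeed, consecutive phrases parsed out of the same state both occur in the realized automaton, so each can have conditional probability close to $1$, and the sum over your group can be as large as the number of distinct start positions rather than at most $1$; AM--GM then gives nothing. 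This is why the paper groups by the pair $(s,g)$ of start position and length: for a fixed position range, distinct contents are genuinely exclusive. The multiplicity of histories inside such a group is then handled by Lemma~\ref{lem:disjoint}, which caps the sum of probabilities of disjoint events under $r$ distributions by $r\leq 2^{m^{\ell}}$, so the $2^{m^{\ell}}$ enters inside the logarithm rather than as a factor in the group count.

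Second, even after repairing disjointness, your bookkeeping cannot yield the stated inequality. Counting how many values the group index takes produces a term $\log(\#\text{groups})$, i.e.\ something like $\log(n^2m)$ or worse; it cannot produce $\log\bigl(\frac{n^2m}{\EE[|D|]}+1\bigr)$, whose $\EE[|D|]$ in the denominator is essential. The paper obtains that term from Lemma~\ref{lem:ziv}: writing $\sum_{s,g}c_{s,g}\log\frac{1}{c_{s,g}} = -|D|\log|D| + |D|H(c_{s,g})$ and splitting $H(c_{s,g})\leq H(c_s)+H(c_g)$, the start-position marginal gives $H(c_s)\leq \log n$ (this is where the $\log\frac{\EE[|D|]}{n}$ comes from), while the length marginal is bounded by the maximum-entropy inequality $H(c_g)\leq \log(\mu+1)+1$ with mean $\mu\leq n^2m/|D|$, using that the phrases are disjoint pieces of at most $n^2m$ transitions. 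Your count-based version would replace $-\log n - \log\bigl(\frac{n^2m}{|D|}+1\bigr)-1$ by roughly $-\log(n^2m)$, which is a strictly weaker lower bound whenever $\EE[|D|]\gg n$ --- exactly the regime that matters, since $\EE[|D|]$ can be of order $n^2m2^{m^{\ell}}/\log n$ (Lemma~\ref{lem:dictbound}) --- so it does not imply the lemma. Your secondary worry about $\log Z$ is resolved as in the paper: in your direction of the inequality $\log Z\geq 0$ simply strengthens the lower bound and can be discarded.
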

The above result together with Lemma~\ref{lem:LZA} implies 
\begin{Theorem}[Appendix~\ref{app:main}]
\label{thm:main}
 If $2^{m^{\ell}} = o\left(\frac{\log n}{\log \log n} \right)$, 
then \LZA\ is a universal compression algorithm.
\end{Theorem}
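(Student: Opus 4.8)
The plan is to fold the two existing estimates---the model-free upper bound on the code length (Lemma~\ref{lem:LZA}) and the entropy lower bound (Lemma~\ref{lem:lower})---into a single bound on the normalized redundancy $(\EE[l_c(A_n)]-H(A_n))/n^2$, and then show this tends to $0$ uniformly over $\cP$. First I would pass from the deterministic bound of Lemma~\ref{lem:LZA} to a bound in terms of $t \ed \EE[|D|]$. The bound is a function $\phi(|D|)$ of the random dictionary size: the terms linear in $|D|$ survive expectation exactly; the term $|D|\log(\nu+1)=|D|\log(n^2/|D|+1)$ is concave in $|D|$, so Jensen applies; and the two $\log\log$ terms are first replaced by linear-in-$|D|$ upper bounds using $\nu=n^2/|D|\le n^2$ (so $\log(\log(\nu+1)+1)=O(\log\log n)$). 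After this reduction $\phi$ is concave, giving $\EE[l_{\LZA}(A_n)]\le\phi(t)$ with $\phi(t)=t\log(n+1)+t\log(n^2/t+1)+O\!\big(t(\log\log n+\log m+1)\big)+n\lceil\log nm\rceil$.

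The crux is an upper bound on $t$ itself, which is what keeps the concave term out of the dangerous ``middle range.'' Here I would combine Lemma~\ref{lem:lower}, written $H\ge\psi(t)=t\big[\log(t/n)-m^{\ell}-\log(n^2m/t+1)-1\big]$, with the trivial bound $H(A_n)\le n^2m+O(n)$ (an automaton is fixed by $nm$ subsets of $Q$, so there are at most $2^{n^2m+O(n)}$ of them). The hypothesis $2^{m^{\ell}}=o\!\big(\log n/\log\log n\big)$ forces $m^{\ell}=o(\log\log n)$, so once $t$ exceeds a negligible threshold the bracket in $\psi(t)$ is $(1-o(1))\log n$; the inequality $n^2m+O(n)\ge\psi(t)$ then pins $t=\EE[|D|]=O(n^2m/\log n)=o(n^2)$. (For $t$ below the threshold the bracket may be nonpositive and $\psi$ is vacuous, but then $t$ is already small, so this bound on $t$ holds in all cases.)

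With $t=o(n^2)$ in hand I would expand $\phi(t)-\psi(t)$ and divide by $n^2$; since $H\ge\psi(t)$, the normalized redundancy is at most $(\phi(t)-\psi(t))/n^2$. Grouping the logarithms, $t\log(n+1)$ combines with $-\psi$'s $t\log(t/n)$ and the two remaining logs into a leading contribution of the form $t\log(n^2/t)$ (together with a $t\log m$), plus an overhead $t\,m^{\ell}$ inherited from the slack of Lemma~\ref{lem:lower}, plus $O(t(\log\log n+\log m))$ and $n\log(nm)$. Because $t=O(n^2m/\log n)$, the quantity $n^2/t=\Omega(\log n/m)$ is only polylogarithmic, so $\log(n^2/t)=O(\log\log n)$ and $t\log(n^2/t)=O(n^2m\log\log n/\log n)=o(n^2)$; the overhead $t\,m^{\ell}=O(n^2m\,m^{\ell}/\log n)=o(n^2)$ by the hypothesis; and the remaining terms are manifestly $o(n^2)$ (the last one is $O(\log(nm)/n)$). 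Taking $\limsup_{n\to\infty}\max_{p\in\cP}$ gives $0$, which is exactly the universality claim of Theorem~\ref{thm:main}.

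The step I expect to be the main obstacle is the bound on $t=\EE[|D|]$ in the second paragraph: without $\EE[|D|]=o(n^2)$ the concave term $t\log(n^2/t)$ is $\Theta(n^2)$ after normalization and the whole argument collapses. The exponential hypothesis $2^{m^{\ell}}=o(\log n/\log\log n)$ is used in two coupled ways here: it makes the bracket in Lemma~\ref{lem:lower} both positive and of size $\sim\log n$ (so that $H\le n^2m$ can be leveraged to control $t$), and it is precisely what absorbs the residual $t\,m^{\ell}$ overhead in the final expansion.
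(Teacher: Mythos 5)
Your proposal is correct and follows essentially the same route as the paper: Jensen's inequality applied to the concave code-length bound of Lemma~\ref{lem:LZA}, a bound $\EE[|D|] = \cO(n^2 m/\log n)$ obtained by playing the trivial entropy upper bound $H \leq n^2 m \log(m+1)$ against Lemma~\ref{lem:lower} (this is exactly the paper's Lemma~\ref{lem:dictbound}, proved by the same contradiction/threshold argument), and a final expansion of the normalized redundancy showing it is $\cO\bigl(2^{m^{\ell}}(\log\log n + m^{\ell})/\log n\bigr) \to 0$. Your replacement of the $\log\log(\nu+1)$ terms by linear-in-$|D|$ bounds before invoking concavity is a minor tightening of the paper's unqualified claim that $W$ is concave, but the structure and the two uses of the hypothesis $2^{m^{\ell}} = o(\log n/\log\log n)$ are identical to the paper's.
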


\section{Experiments}
\label{sec:experiments}

\subsection{Automaton structure compression}

\LZA\ compresses automata, but for most applications,
it is sufficient to compress the automata structure.
We convert \LZA\ into \textsc{LZA}$_S$, an algorithm for automata 
structure compression as follows. 
We first perform a breadth first search (BFS) with the initial state as
the root state
and relabel the states in their BFS visitation order.
We then run \LZA\ with the following modification.
In step $2$, for every state $q$ we divide the transitions from $q$ into 
two groups, $T^q_{\text{old}}$ transitions whose destination
states have been traversed before in \LZA\ and $T^q_{\text{new}}$,
transitions whose destinations have not been traversed. 
Note that since the state numbers are ordered based on a BFS visit, 
the destination state numbers in $T^q_\text{new}$ are $1,2,\ldots n$, and
can be recovered easily while decoding and thus need not be stored. 
Hence, we run step $2b$ in \LZA\ only on transitions in 
$T^q_{\text{old}}$. For $T^q_{\text{new}}$, we just compress the transition labels 
using LZ78. 

Since each destination state can appear in $T^q_{\text{new}}$ only once, 
the number of transitions in $\cup_{q} T^q_\text{new} \leq n$,
Since this number is $\ll n^2$, the normalization factor in the definition 
of universal compression algorithm for automata, 
the proof of Theorem~\ref{thm:main} extends to \LZA$_s$.
Since for most applications, it is sufficient to compress 
to the automata structure, we implemented \textsc{LZA}$_s$ in C++ and
added it to the \emph{OpenFst} open-source library~\cite{AllauzenRSSM07}.

\subsection{Comparison}
The best known convergence rates of all Lempel-Ziv 
algorithms
for sequences are $\cO \left( \frac{\log \log n}{\log n} \right)$
and \LZA\ has the same convergence rate under the $\ell$-memory probabilistic model. 

However in practice data sets have finitely many states and the underlying automata
may not be generated from an $\ell$-memory probabilistic model. To prove the 
practicality of the algorithm, we compare \textsc{LZA$_{s}$} with the Unix compress 
command (LZ78) and gzip (Lempel-Ziv-Walsh and Huffman coding)
for various synthetic 
and real data sets. 

\subsubsection{Synthetic Data}
While the $\ell$-memory probabilistic model illustrates 
a broad class of probabilistic models on which
\LZA\,is universal, generating samples from an $\ell$-memory 
model is difficult as the normalization factor $Z$ is hard to compute. We therefore 
test our algorithm on a few simpler synthetic data sets. In all our experiments the 
number of states is $1000$ and the results are averaged over $1000$ runs. 

Table~\ref{tab:syn} summarizes our results for a few synthetic data sets, specified in bytes.
Note that one of the main advantages of \LZA$_S$ over existing algorithms is that \LZA$_S$ just compresses the structure, which is sufficient for applications
in speech processing and language modeling. Furthermore, note that to obtain the actual automaton from the structure we need 
the original state numbering, 
which can be specified in $n \log n$ bits, which is less than $1250$ bytes in our experiments.
Even if we add $1250$ bits to our results in Table~\ref{tab:syn}, \LZA\
still performs better than gzip and compress.

We run the algorithm on four different synthetic data sets $G_1,G_2,A_1,A_2$.
$G_1$ and $A_1$ are models with a uniform out-degree distribution 
over the states and $G_2$ and $A_2$ are models with a non-uniform out-degree distribution:
\newline
\indent $G_1$: directed \ER graphs where we randomly generate transitions between
every source-destination pair with probability $1/100$. \newline
\indent $A_1$: automata version of \ER graphs, where there is a transition between every two states 
with probability $1/100$ and the transition labels are chosen independently 
from an alphabet of size $10$ for each transition. 
\newline
\indent $G_2$: We first assign each state a class $c \in\{1,2,\ldots 1000\}$
randomly. We connect every two states $s$ and $d$ with probability $1/(c_s+c_d)$.
This ensures that the graph has degrees varying from $2$ to $\log 1000$.
\newline
\indent $A_2$: we generate the transitions as above and we label each transition to be a deterministic
 function of the destination state. This is similar to $n$-gram models, where the 
the destination state determines the label. Here again we chose $|\Sigma| = 10$.
\noindent
\begin{table}[t]
\begin{center}
\begin{tabular}{| c  | c | c | c | c |  c|}
  \hline                       
  Class &   \textsc{LZA}$_S$ & compress & gzip  & \textsc{LZA}$+$gzip \\
 $G_1$ & $18260$ & $22681$ & $23752$ &  $17320$ \\
 $A_1$ & $21745$ & $33478$ & $31682$  & $21108$ \\ 
 $G_2$ & $2536$ & $4994$ & $4564$ & $2443$ \\ 
  $A_2$ & $3027$ & $6707$ & $5546$ &  $2940$ \\ 
  \hline  
\end{tabular}
\end{center}
\caption{Synthetic data compression examples (in bytes).}
\label{tab:syn}
\end{table}
\ignore{
\noindent
\begin{tabular}{| c  | c | c | c | c | c | c|}
  \hline                       
  Class &   \textsc{LZA}$_S$ & compress & gzip  & bzip2 &\textsc{LZA},gzip \\
 $G_1$ & $18260$ & $22681$ & $23752$ &  $17320$ \\
 $A_1$ & $21745$ & $33478$ & $31682$  & $19288$  & $21108$ \\ 
 $G_2$ & $2536$ & $4994$ & $4564$ & $3023$ & $2443$ \\ 
  $A_2$ & $3027$ & $6707$ & $5546$ & $3662$ & $4190$ \\ 
  \hline  
\end{tabular}
\vspace{2ex}}

Note that \LZA\ always performs better than the standard 
Lempel-Ziv-based algorithms gzip and compress.
Note that algorithms designed with specific knowledge of the underlying model can achieve 
better performance. For example, for $G_1$,  arithmetic coding can be used to obtain 
a compressed file size of $n^2 h(0.01)/8 \approx 10000$ bytes. 
However the same algorithm would not perform well for $G_2$ or $A_2$.

\subsubsection{Real-World Data}
We also tested our compression algorithm on a variety of `real-world'
automata drawn from various speech and natural language
applications. These include large speech recognition language models
and decoder graphs~\cite{MohriPR02}, text normalization grammars for text-to-speech~\cite{TaiSS11},
speech recognition and machine translation lattices~\cite{IglesiasABGR11}, and pair $n$-gram
grapheme-to-phoneme models~\cite{BisaniN08}. We selected approximately eighty such
automata from these tasks and removed their weights and output labels
(if any), since we focus here on unweighted automata. Figure~\ref{fig:real_cmp}
shows the compressed sizes of these automata, ordered by their uncompressed 
(adjacency-list) size rank,
with the same set of compression algorithms
presented in the synthetic case. At the smallest sizes, gzip out-performs \textsc{LZA}, 
but after about 100 kbytes in compressed size, \textsc{LZA} is better. Overall, 
the combination of \textsc{LZA} and gzip performs best.

\begin{figure}[t]
\centering
\includegraphics[scale=0.4]{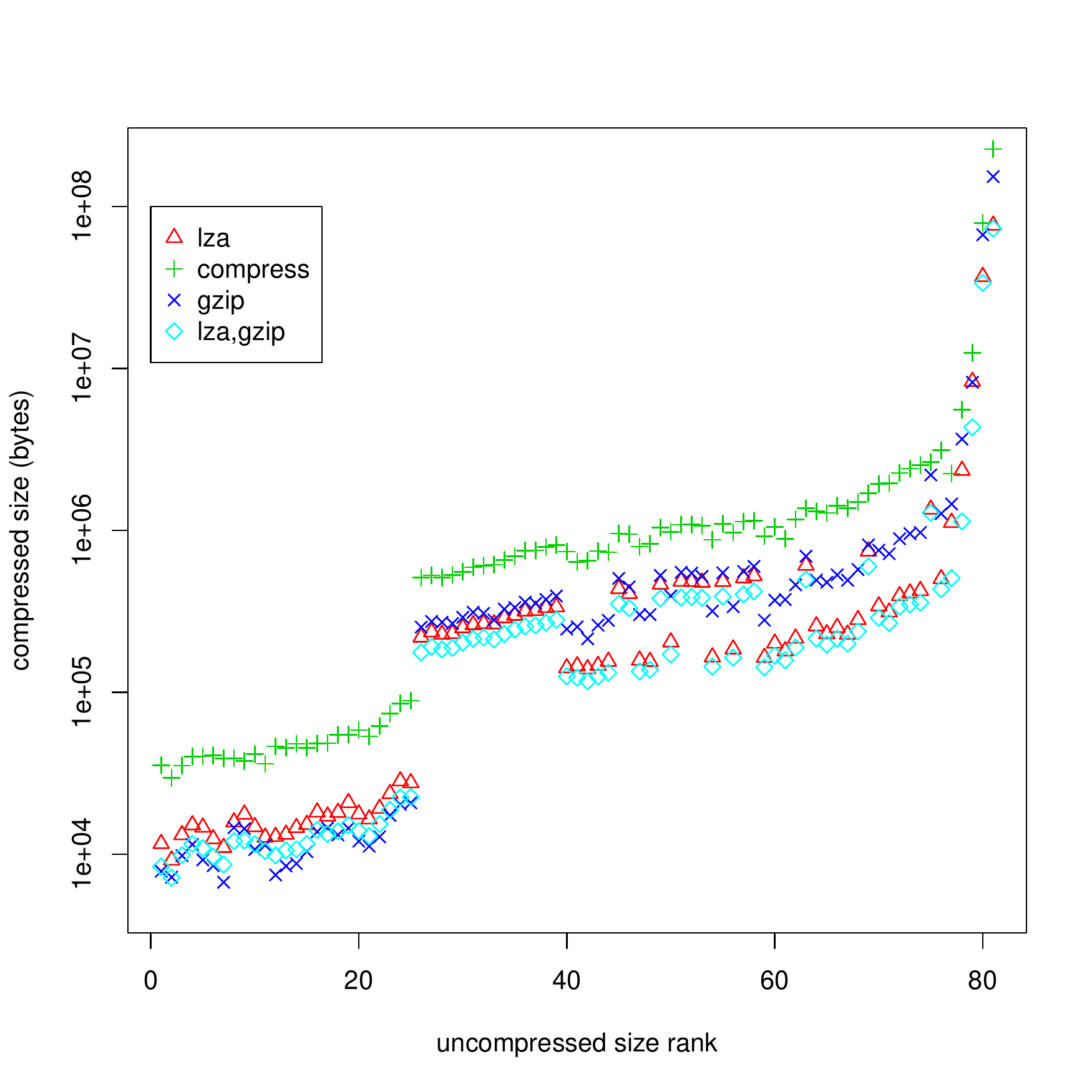}
\caption{Real-world compression examples.}
\label{fig:real_cmp}
\end{figure}

\ignore{
\section{Conclusion and Future Work}
\label{sec:conclusion}
Motivated by practical applications in speech processing
and language modeling we studied automata compression. 
We proposed an $\ell$-memory probabilistic model for
automata and graphs that captures many of the real world scenarios
and proposed a universal compression algorithm that has the same convergence
rate as that of Lempel-Ziv algorithms for sequences. 
Further by experiments on synthetic and real data we showed that the algorithm performs well.
We wish to develop on the fly algorithms that can uncompress parts of the automata
and . Faster algorithms?
}
\section{Acknowledgements}
\label{sec:acknowledgements}
We thank Jayadev Acharya and Alon Orlitsky for helpful discussions.
\bibliographystyle{plain}
\bibliography{abr,masterref}
\newpage
\appendix

\section{Proof of Lemma~\ref{lem:difference}}
\label{app:different}
  Since $0$ is included in the set, the number of bits used to
  represent $x$ is upper bounded by $\theta(x) = \log (x + 1) + 2 \log
  (\log (x + 1) + 1) \rfloor + 1$. Observe that $\theta$ is a concave
  function since both $\log$ and $x \mapsto \log (\log x)$ are
  concave.  Let $x_0 = 0$. Then, by the concavity of $\theta$, the
  total number of bits $B$ used can be bounded as follows:
\begin{align*}
B 
& \leq \sum_{i = 1}^d \theta(x_i - x_{i - 1})\\
& = d \, \sum_{i = 1}^d \frac{1}{d} \theta(x_i - x_{i - 1})\\
& \leq d \, \theta\Big(\frac{1}{d} \sum_{i = 1}^d x_i - x_{i - 1} \Big)
= d \, \theta\Big(\frac{1}{d} x_n \Big)
\leq d \, \theta\Big(\frac{n}{d} \Big),
\end{align*}
where we used for the last inequality $x_n \leq n$ and the fact that
$\theta$ is an increasing function. This completes the proof
of the lemma.

\section{Proof of Lemma~\ref{lem:LZA}}
\label{app:LZA}
  Let $k_q$ be the number of elements added to the dictionary when
  state $q$ is visited by \LZA.  The maximum value of the destination
  state is $n$. Thus, by Lemma~\ref{lem:difference}, the number of
  bits used to code $\temp_\edge$ is at most
\begin{equation*}
  \sum_{q = 1}^n \Big( k_q \theta\Big(\frac{n}{k_q} \Big) +  k_q \lceil \log m \rceil \Big),
\end{equation*}
where $\theta(\cdot)$ is the function introduced in the proof of
Lemma~\ref{lem:difference}.  Similarly, since the maximum value of any
dictionary element is $|D|$, by Lemma~\ref{lem:difference}, the number
of bits used to code $\temp_d$ is at most
\begin{equation*}
  \sum_{q = 1}^n  k_q \theta\bigg(\frac{|D|}{k_q} \bigg).
\end{equation*}
By concavity these summations are maximized when $k_q =
\frac{|D|}{n}$ for all $q$. Plugging in that expression in
the sums above yields the following upper bound on the
maximum number of bits used:
\begin{equation*}
  |D| \theta(\nu) +  |D| \lceil \log m \rceil + |D| \theta(n). 
\end{equation*}
Additionally, this number must be augmented by $n \lceil \log nm
\rceil$ since $\lceil \log nm \rceil$ bits are used to encode each
$d_q$, which completes the proof.

\section{Proof of Lemma~\ref{lem:lower}}
\label{app:lower}
One of the main technical tools we use is Ziv's inequality,
which is stated below.
\begin{Lemma}[Variation of Ziv's inequality]
\label{lem:ziv}
For a probability distribution $p$ over non-negative integers with mean $\mu$,
\begin{equation*}
 H(p) \leq \log(\mu + 1)  +  1.
 \end{equation*}
\end{Lemma}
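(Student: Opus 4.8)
The plan is to prove this by a maximum-entropy argument: among all distributions on the non-negative integers with a fixed mean $\mu$, the geometric distribution has the largest entropy, and that maximal entropy is essentially $\log(\mu+1)$. The clean way to carry this out, without Lagrange multipliers, is through the Gibbs inequality in its cross-entropy form,
\[
H(p) \;\le\; -\sum_{k \ge 0} p(k)\,\log q(k),
\]
which holds for \emph{every} probability distribution $q$ on the non-negative integers (it is exactly the statement that the relative entropy $\sum_{k} p(k)\log\frac{p(k)}{q(k)}$ is non-negative). The whole proof then turns on choosing $q$ well, and the natural choice is the geometric law $q(k) = (1-\theta)\theta^{k}$ for a parameter $\theta \in (0,1)$ to be fixed later.

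The reason this $q$ works is that $\log q(k) = \log(1-\theta) + k\log\theta$ is affine in $k$, so substituting it into the cross-entropy collapses the sum to the only two quantities we control, namely $\sum_k p(k) = 1$ and $\sum_k k\,p(k) = \mu$:
\[
-\sum_{k \ge 0} p(k)\,\log q(k) \;=\; -\log(1-\theta) \;-\; \mu\log\theta.
\]
I would then fix $\theta = \mu/(\mu+1)$, the value that minimizes this right-hand side over $\theta$ (equivalently, the value that matches the mean of $q$ to $\mu$). With $1-\theta = 1/(\mu+1)$ this gives
\[
H(p) \;\le\; \log(\mu+1) \;+\; \mu\log(1 + 1/\mu).
\]

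The last step, and the only place where a genuine inequality rather than an identity is used, is the elementary bound $\mu\log(1 + 1/\mu) \le 1$, which follows from $\log(1+x) \le x$ (in natural units; in base two one absorbs the constant $\log e$). This yields $H(p) \le \log(\mu+1) + 1$ and completes the argument. I expect the main obstacle to be not the computation—which is short—but the two justifications surrounding it: first, verifying that the infinite sums converge and that the optimizing choice $\theta = \mu/(\mu+1)$ is legitimate (both need only $\mu$ finite, which is assumed); and second, the one creative move of comparing against the geometric distribution, which is in fact forced once one notices that a constraint on the \emph{mean} is matched by a comparison law whose log-density is affine in $k$.
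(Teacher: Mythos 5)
The paper never proves this lemma at all: it is stated as a known tool (a ``variation of Ziv's inequality,'' the entropy bound used in standard LZ78 optimality proofs), so there is no in-paper argument to compare against. Your route---Gibbs' inequality $H(p) \le -\sum_{k} p(k)\log q(k)$ with the geometric comparison law $q(k)=(1-\theta)\theta^{k}$ and the mean-matching choice $\theta = \mu/(\mu+1)$---is the standard maximum-entropy derivation of exactly this result, and your computation giving
\[
H(p) \;\le\; \log(\mu+1) + \mu\log\bigl(1+\tfrac{1}{\mu}\bigr)
\]
is correct, as are your remarks on convergence (finiteness of $\mu$ is all that is needed).

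The one step that deserves a flag is the last one, and the flaw it exposes is shared by the lemma statement itself. The inequality $\mu\log(1+1/\mu)\le 1$ is a theorem only for the \emph{natural} logarithm; in base $2$ the left-hand side increases to $\log_2 e \approx 1.44$ as $\mu \to \infty$, so the constant cannot simply be ``absorbed,'' as your parenthetical suggests. Indeed, measured in bits the lemma as stated is false: the geometric distribution with mean $\mu = 3$ has entropy $(\mu+1)\log_2(\mu+1)-\mu\log_2\mu \approx 3.245$, which exceeds $\log_2(\mu+1)+1 = 3$. The correct base-independent statement is $H(p) \le \log(\mu+1) + \log e$, which your argument proves verbatim by stopping before the last substitution. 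This is all the paper needs: in the proof of Lemma~\ref{lem:lower} and Theorem~\ref{thm:main} the additive constant is irrelevant, since everything is divided by $n^2$ and only the asymptotics matter. So your proof is correct precisely under the convention (natural log) that makes the lemma true; you should either state that convention explicitly or replace the $+1$ by $+\log e$.
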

The next lemma bounds the probability of disjoint events under different distributions.
\begin{Lemma}
\label{lem:disjoint}
If $A_1, A_2, \ldots, A_k$ be a set of disjoint events. Then for a set of
distributions $p_1,p_2,\ldots p_r$,
\begin{equation*}
  \sum_{i = 1}^k \sum_{j = 1}^r p_j(A_k) \leq  \sum^r_{j=1} 1 = r.
\end{equation*}
\end{Lemma}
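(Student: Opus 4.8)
The plan is to interchange the order of the two summations and then exploit disjointness measure by measure. Reading the statement with the evident typo corrected (the inner summand should be $p_j(A_i)$, indexed by the summation variable $i$), I would first rewrite
\begin{equation*}
\sum_{i=1}^k \sum_{j=1}^r p_j(A_i) = \sum_{j=1}^r \sum_{i=1}^k p_j(A_i).
\end{equation*}
The point of this swap is to isolate a single distribution $p_j$ in the inner sum, so that the disjointness hypothesis on $A_1, A_2, \ldots, A_k$ can be applied cleanly.

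Next I would invoke finite additivity of the probability measure $p_j$. Since the events $A_1, \ldots, A_k$ are pairwise disjoint, for each fixed $j$ we have
\begin{equation*}
\sum_{i=1}^k p_j(A_i) = p_j\Bigl( \bigcup_{i=1}^k A_i \Bigr) \leq 1,
\end{equation*}
where the final inequality is just the fact that $p_j$ is a probability distribution and hence assigns mass at most $1$ to any event. This is the only structural fact needed, and it holds uniformly in $j$ because disjointness of the $A_i$ does not depend on which distribution we integrate against.

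Combining the two displays, summing the per-distribution bound over the $r$ distributions yields
\begin{equation*}
\sum_{j=1}^r \sum_{i=1}^k p_j(A_i) \leq \sum_{j=1}^r 1 = r,
\end{equation*}
which is exactly the claimed bound. I do not anticipate any genuine obstacle here: the lemma is a one-line consequence of disjointness plus the normalization of each $p_j$. The only thing to be careful about is the bookkeeping of the double sum and making sure the disjointness is used per-measure (it is the events, not the distributions, that are disjoint), so that the bound of $1$ is applied $r$ times rather than once.
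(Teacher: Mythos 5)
Your proof is correct: swapping the sums, applying finite additivity of each $p_j$ over the disjoint events, and bounding each measure's total mass by $1$ is exactly the argument needed (including your correction of the typo $p_j(A_k) \to p_j(A_i)$). The paper treats this lemma as self-evident and omits the proof entirely, and your one-line argument is precisely the intended justification.
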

We now lower bound $H(p)$ in terms of the number of dictionary
elements.

 Let $d_q \ed |E[q]|$  be the number of transitions from state
$q$ and let transitions in \newline $E[q] = \{(q,a_1,q_1),(q,a_2,q_2),\ldots, (q,a_{d_q},q_{d_q})\}$ are ordered as follows:
for all $i$, $q_i \leq q_{i+1}$ and if $q_i = q_{i+1}$ then $a_i < a_{i+1}$.
\ignore{
  Let $d_q$ be the number of transitions leaving $q$ with non-empty
  destination states and $a_1, a_2, \ldots, a_{d_q}$ the labels of
  these transitions once sorted in a monotonically non-decreasing
  order of their destination states.} To simplify the discussion, we
  will use the shorthand $e_{q,i} \ed (q,a_i,q_i)$. Then, by the
  definition of our probabilistic model,
\begin{align*}
  \log p(A) 
 & = \sum_{q = 1}^n\log  p(e_{q,1}, e_{q,2}, \ldots, e_{q,d_q} | h^{\ell}_q) - \log Z .
\end{align*}
We group the transitions the way \LZA\ constructed the dictionary.
Let $\{D_{q,i}\}$ be the set of dictionary elements added when state
$q$ is visited during the execution of the algorithm.  For a
dictionary element $D_{q,i}$, let $s_{q,q'}$ be the starting $e_{q,i}$
and $t_{q,q'}$ the terminal $e_{q,i}$.  Then, by the independence of
the transition labels and the fact that $Z \geq 1$,
\begin{align*}
   \log p(A)  \leq  \sum_{q = 1}^n \sum_{D_{q,i}} \log  p(e_{q,s_{q,i}},e_{q,s_{q,i} + 1},\ldots e_{q,t_{q,i}} | h^{\ell}_q).
\end{align*}
Let $g_{q,i} = t_{q,i} - s_{q,i}$.  We group them now with $g_{q,i}$
and $s_{q,i}$. Let $\cD(s, g)$ be the set of dictionary elements
$D_{q,i}$ with $s_{q,i} = s $ and $g_{q,i} = g$ and let $c_{s, g}$
be the cardinality of that set: $c_{s, g} = | \cD(s, g) |$. Then,
by Jensen's inequality, we can write
\begin{align*}
  &  \sum_{q = 1}^n \sum_{D_{q,i}} \log  p(e_{q,s_{q,i}},e_{q,s_{q,i} + 1},\ldots e_{q,t_{q,i}} | h^{\ell}_q) \\
  & = \sum_{q = 1}^n \sum_{s = 1}^n \sum_{g = 1}^n \sum_{D_{q,i} \in \cD(s, g)} \log  p(e_{q,s},e_{q,s + 1},\ldots e_{q,s + g} | h^{\ell}_q) \\
  & = \sum^n_{s=1} \sum_{g=1}^n c_{s, g} \frac{1}{c_{s,g}}\sum_{q = 1}^n   \sum_{D_{q,i} \in \cD(s, g)} \log  p(e_{q,s},e_{q,s + 1},\ldots e_{q,s + g} | h^{\ell}_q)  \\
  & \leq \sum_{s = 1}^n \sum_{g = 1}^nc_{s,g}  \log \frac{1}{c_{s,g}}\sum_{q = 1}^n   \sum_{D_{q,i} \in \cD(s, g)}   p(e_{q,s},e_{q,s + 1},\ldots e_{q,s + g} | h^{\ell}_q)  \\
  & \leq \sum_{s = 1}^n \sum_{g = 1}^n c_{s,g} \log
  \frac{2^{m^{\ell}}}{c_{s,g}}.
\end{align*}
where the last inequality follows by Lemma~\ref{lem:disjoint}, the
fact that the events in each summation are disjoint and mutually
exclusive and that the number of possible histories $h^{\ell}_q$ is
$\leq 2^{m^{l}}$.  We now have $\sum_{s,g} c_{s,g}= |D|$.  Thus,
\begin{align*}
  & \sum_{s = 1}^n \sum_{g = 1}^n c_{s,g} \log \frac{2^{m^{\ell}}}{c_{s,g}}\\
  & = |D| m^{\ell}  +  \sum_{s = 1}^n \sum_{g = 1}^n c_{s,g} \log \frac{1}{c_{s,g}} \\
  & = |D| m^{\ell} - |D| \log |D|  +  |D| \sum_{s = 1}^n \sum_{g = 1}^n \frac{ c_{s,g}}{D} \log \frac{|D|}{c_{s,g}} \\
  & = |D| m^{\ell} - |D| \log |D| + |D| H(c_{s,g}).
\end{align*}
Let $c_s$ and $c_g$ be the projections of $c_{s,g}$ into first and second coordinates.
Then, we can write
\begin{align*}
 H(c_{s,g}) \leq H(c_s)  +  H(c_g)
 & \leq \log n  +  H(c_g).
\end{align*}
Using $\sum_{s,g} c_{s,g} g \leq n^2 m$, by Ziv's inequality, the
following holds: $H(c_g) \leq \log \big( \frac{n^2m}{|D|} + 1 \big)$.
Combining this with the previous inequalities gives
\begin{equation*}
\log  p(A) \leq |D| \left[ m^{\ell}  +  \log \left( \frac{n^2m}{|D|}  +  1 \right)  +  1 - \log \frac{|D|}{n} \right].
\end{equation*}
Taking the expectation of both sides, next using
the concavity of $|D| \mapsto - |D| \log (|D|)$ and Jensen's inequality yield
\begin{equation*}
 H(p) \geq  \EE[|D|] \left[  \log \frac{\EE[|D|]}{n}  - m^{\ell} -
   \log \left( \frac{n^2m}{\EE[|D|]}  +  1 \right) -1 \right].
\end{equation*}

\section{Proof of Theorem~\ref{thm:main}}
\label{app:main}
We first upper bound $\EE[|D|]$ using Lemma~\ref{lem:lower}.
\begin{Lemma}
\label{lem:dictbound}
 For the dictionary $D$ generated by \LZA
 \begin{equation*}
\EE[|D|] \leq \frac{10n^{2}m\log(m + 1)2^{m^{\ell}}}{\log n}.
 \end{equation*}
\end{Lemma}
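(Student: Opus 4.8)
The plan is to pit the entropy lower bound of Lemma~\ref{lem:lower}, which is an increasing function of $\EE[|D|]$ on the relevant range, against the trivial combinatorial upper bound on the entropy, and simply read off an upper bound on $\EE[|D|]$. The heart of the argument is a one-variable monotonicity/overshoot computation.

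First I would record two elementary facts. Since a transition function on $n$ states over an alphabet of size $m$ assigns to each of the $nm$ pairs $(q,a)$ an arbitrary subset of $Q$, there are at most $2^{n^2 m}$ automaton structures, so $H(p)\le n^2 m$ (any initial/final-state choices add only a lower-order $n+\log n$ term). Second, every dictionary phrase consumes at least one of the at most $n^2 m$ transitions of the automaton, so $\EE[|D|]\le n^2 m$ always; this confines the quantity we are bounding to $[0,n^2 m]$.

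Next I would combine these with Lemma~\ref{lem:lower}. Writing its right-hand side as $\EE[|D|]\cdot B$ with
\[
B = \log\frac{\EE[|D|]}{n} - m^\ell - \log\Bigl(\frac{n^2 m}{\EE[|D|]}+1\Bigr) - 1,
\]
the two facts give $\EE[|D|]\cdot B\le n^2 m$. Because $\EE[|D|]\le n^2 m$ forces $\frac{n^2 m}{\EE[|D|]}\ge 1$, we may bound $\frac{n^2 m}{\EE[|D|]}+1\le \frac{2n^2 m}{\EE[|D|]}$, which tames the awkward logarithm and yields $B\ge B_0:=2\log\EE[|D|]-3\log n-\log m-m^\ell-2$. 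Setting $f(x)=x\,(2\log x-3\log n-\log m-m^\ell-2)$, this gives $f(\EE[|D|])\le \EE[|D|]\cdot B\le n^2 m$. I would finish by contradiction: let $D_0$ be the claimed bound $\frac{10\,n^2 m\,\log(m+1)\,2^{m^\ell}}{\log n}$; if $\EE[|D|]>D_0$ then (using $\EE[|D|]\le n^2 m$) one has $D_0<n^2 m$, and on $[D_0,n^2 m]$ the function $f$ is increasing, since $f'(x)=2\log x-3\log n-\log m-m^\ell-2+2/\ln 2$ is itself increasing and already nonnegative at $x=D_0$. A direct substitution gives $2\log D_0-3\log n-\log m-m^\ell-2=\log n+m^\ell+\log m+2\log\log(m+1)-2\log\log n+O(1)$, which exceeds $\tfrac12\log n$ once $n$ is large; hence $f(D_0)\ge D_0\cdot\tfrac12\log n=5\,n^2 m\,\log(m+1)\,2^{m^\ell}>n^2 m$. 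Monotonicity then yields $f(\EE[|D|])>n^2 m$, a contradiction, so $\EE[|D|]\le D_0$.

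The main obstacle is precisely this bracket estimate: one must check that at $x=D_0$ the quantity $B_0$ is genuinely of order $\log n$ (so that $f(D_0)$ overshoots $n^2 m$ with room to spare) despite the competing $-3\log n$ and $-2\log\log n$ terms, while also ensuring $f$ is monotone across the whole interval. The generous constant $10$ together with the factors $\log(m+1)$ and $2^{m^\ell}$ in $D_0$ supply exactly the slack needed for $D_0\cdot B_0$ to clear $n^2 m$ uniformly in $m$ and $\ell$ once $n$ is large; everything else is routine algebra, and notably the bound here needs only $n$ large, not the sharper hypothesis $2^{m^\ell}=o(\log n/\log\log n)$ required later in Theorem~\ref{thm:main}.
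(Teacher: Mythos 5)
Your proof is correct and takes essentially the same route as the paper's: both pit the entropy lower bound of Lemma~\ref{lem:lower} against a trivial combinatorial upper bound on $H(p)$ (the paper uses $H(p)\le n^2m\log(m+1)$, you use the slightly tighter $n^2m$) and derive a contradiction from assuming $\EE[|D|]$ exceeds the claimed threshold. The only differences are matters of bookkeeping: you invoke the extra fact $\EE[|D|]\le n^2m$ to simplify the logarithmic term up front, and you make explicit (via $f'$) the monotonicity step that the paper uses implicitly when it substitutes the threshold $U$ into the bound.
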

\begin{proof}
  An automaton is a random variable over $n^{2}$ transition labels
  each taking at most $m^m + 1$ values, hence $H(p) \leq n^{2} m \log
  (m + 1)$. Combining this inequality with Lemma~\ref{lem:lower} yields
\begin{equation*}
 n^{2} m \log(m + 1) \geq  
\EE[|D|] \left[  \log \frac{\EE[|D|]}{n}  - {m^{\ell}} - \log \left(
    \frac{n^2m}{\EE[|D|]}  +  1 \right) -1 \right].
\end{equation*}
Now, let $U = \frac{10n^{2}m\log(m + 1)2^{m^{\ell}}}{\log n}$ and
assume that the inequality $\EE[|D|] > U$ holds. Then,
the following inequalities hold:
\begin{align*}
& \EE[|D|] \left[  \log \frac{\EE[|D|]}{n}  - {m^{\ell}} - \log \left( \frac{n^2}{\EE[|D|]}  +  1 \right) -1\right] \\
& > U  \left[  \log \frac{10mn\log(m + 1)2^{m^{\ell}}}{\log n}  - m^{\ell} 
\right] 
  +  U \left[- \log \left( \frac{\log n}{10\log(m + 1)h}  +  1 \right) -1\right] \\
 & \geq U \left[  \log n - \log \left( \frac{\log n}{10\log(m + 1)2^{m^{\ell}}}  +  1 \right) \right] 
 -U  \left[  \log \log n\right]  \\
& > n^2 m \log(m + 1),
\end{align*}
which leads to a contradiction. This completes the proof of the lemma.
\end{proof}

We now have all the tools to prove Theorem~\ref{thm:main}.
Let $W(|D|)$ be the upper bound in Lemma~\ref{lem:LZA}. 
Since we have a probabilistic model and the fact that $W$ is concave in $|D|$,
the expected number of bits 
\[
\EE[l_{LDA}(A_n)] \leq W(\EE[|D|]).
\]\ignore{
\begin{align*}
& \EE[|D|] \left[ \log \left( (n + 1) \cdot \left( \frac{1}{W}  +  1 \right)  \cdot (\log (n + 1)  +  1)^2 \right) \right]\\
&  + \EE[|D|] \left[2   \log  \left(  \log \left( \frac{1}{W}  + 1 \right)  +  1  \right) 
  +  2  +   \lceil \log m \rceil   \right] \\
&  +   n \lceil \log nm \rceil,
\end{align*}
where $W = \frac{\EE[|D|]}{n^2}$.
}
Substituting the lower bound on $H(p)$ from Lemma~\ref{lem:lower} and rearranging terms, we have
\begin{align*}
 &  \max_{p(A_n)} \frac{\EE[l_{\LZA}(a_n)]  -H(p)}{n^2} \\
& =  \max_{p(A_n)} \frac{W(\EE[|D|])  -H(p)}{n^2} \\
\ignore{ & \leq \EE[|D|] \left[ \log \left( (n + 1) \cdot \left( \frac{n^2}{\EE[|D|]}  +  1 \right)  \cdot (\log (n + 1)  +  1)^2 \right) \right]\\
& +   \frac{\EE[|D|]}{n^2} \left[ 2   \log  \left(  \log \left( \frac{n^2}{\EE[|D|]}  + 1 \right)  +  1  \right)  +  2  +   \lceil \log m \rceil   \right] \\
   & -  \frac{\EE[|D|]}{n^2} \left[  \log \left( \frac{\EE[|D|]}{n} \cdot 
\left( \frac{n^2m}{\EE[|D|]}  +  1 \right) \right) - m^{\ell} -1\right]  +  \frac{\lceil \log nm \rceil }{n} \\}
   & \leq  \frac{\EE[|D|]}{n^2} \left[ \log \left( \frac{(n + 1)n}{\EE[|D|]} \cdot \left( \frac{n^2m}{\EE[|D|]}  +  1 \right)^2 \cdot (\log (n + 1)   +  1)^2 \right) \right] \\
& +   \frac{\EE[|D|]}{n^2} \left[2   \log  \left(  \log \left( \frac{n^2}{\EE[|D|]}  + 1 \right)  +  1  \right) +  m^{\ell}  +  4  +  \log m \right] 
  +  \frac{\lceil \log nm \rceil }{n}\\
   & = \cO \left(2^{m^{\ell}} \frac{\log \log n  +  m^{\ell}}{\log n}\right).
\end{align*}
The last equality follows from Lemma~\ref{lem:dictbound}.
As $n \to \infty$, the bound goes to $0$ and hence \LZA\ is a universal compression algorithm.

\end{document}